         \newcommand{\id}{\mathds{1}}}
         \let\mathds=\mathbb
         \newcommand{\id}{\mbox{1 \kern-.59em {\rm l}}}}
\let\one=\id
\newcommand{\nocontentsline}[3]{}
\newcommand{\tocless}[3]{\bgroup\let\addcontentsline=\nocontentsline#1{#2}#3\egroup}
\newtheorem{theorem}{Theorem}
\newtheorem{lemma}[theorem]{Lemma}
\newcommand{\qed}{\nobreak \ifvmode \relax \else
      \ifdim\lastskip<1.5em \hskip-\lastskip
      \hskip1.5em plus0em minus0.5em \fi \nobreak
      \vrule height0.75em width0.5em depth0.25em\fi}
\newenvironment{proof}[1][Proof. \hspace*{1ex}]{\begin{trivlist}
\item[\hskip \labelsep {\bfseries #1}]}{\qed\end{trivlist}}
\newcommand{\be}{\begin{equation}}
\newcommand{\ee}{\end{equation}}
\newcommand{\eq}[1]{(\ref{#1})}
\def\nn{\nonumber}
\def\bea{\begin{eqnarray}}
\def\eea{\end{eqnarray}}
\def\beqa{\begin{eqnarray}} 
\def\eeqa{\end{eqnarray}} 
\def\beq{\begin{equation}} 
\def\eeq{\end{equation}}
\def\Tr{{\rm Tr}}
\def\a{\alpha}          
\def\d{\delta}
 \def\L{\Lambda} 
\def\m{\mu}     \def\n{\nu}
\def\r{\rho}
\def\s{\sigma}  
\def\th{\theta}
 \def\cH{{\cal H}} 
\def\cM{{\cal M}} \def\cN{{\cal N}}
\newcommand{\R}{\mathds{R}}
\newcommand{\C}{\mathds{C}}
\newcommand{\Z}{\mathds{Z}}
\def\bit{\begin{itemize}}
\def\eit{\end{itemize}}
\def\({\left(}
\def\){\right)}
\def\diag{\mbox{diag}}
\def\Mat{{\rm Mat}}
\def\d{\delta}
\def\pa{\partial} \def\del{\partial}
\def\bcomment#1{}
\newcommand{\co}[2]{[#1,#2]}						
\renewcommand{\a}{\alpha}
\renewcommand{\d}{\delta}
\renewcommand{\th}{\theta}
\renewcommand{\r}{\rho}
\renewcommand{\L}{\Lambda}
\renewcommand{\Xi}{\Xi}
\sloppy \allowdisplaybreaks[3]
\begin{document}

\renewcommand{\title}[1]{\vspace{10mm}\noindent{\Large{\bf
#1}}\vspace{8mm}} \newcommand{\authors}[1]{\noindent{\large
#1}\vspace{5mm}} \newcommand{\address}[1]{{\itshape #1\vspace{2mm}}}


\begin{flushright}
UWThPh-2011-20
\end{flushright}

\begin{center}

\title{ \Large  Split noncommutativity and compactified brane solutions \\[1ex] in matrix models}

\vskip 3mm

\authors{Harold {Steinacker{\footnote{harold.steinacker@univie.ac.at}}}
}

\vskip 3mm

\address{ {\it Faculty of Physics, University of Vienna\\
 Boltzmanngasse 5, A-1090 Vienna (Austria)  }}

\vskip 1.4cm

\textbf{Abstract}

\end{center}

Solutions of the undeformed IKKT matrix model with structure $\R^{3,1} \times K$ are presented,
where the noncommutativity relates the compact with the non-compact space. 
The extra dimensions are stabilized by angular momentum, and the
scales of $K$ are generic moduli of the solutions. 
Explicit solutions are given for $K= T^2, K = S^3 \times S^1, \,  K = S^2 \times T^2$ and $K = S^2 \times S^2$.
Infinite towers of Kaluza-Klein modes may arise in some directions, 
along with an effective UV cutoff on the non-compact space.
Deformations of these solutions carry NC gauge theory 
coupled to (emergent) gravity. 
Analogous solutions of the BFSS model are also given.

\vskip 1.4cm

\tableofcontents


%
%
%
%
%
%
%
%
%
%

\section{Introduction}\label{sec:background}

Matrix models such as the IKKT respectively IIB model \cite{Ishibashi:1996xs}
provide fascinating candidates for a quantum theory of 
fundamental interactions. 
Part of the appeal stems from the fact that geometry is not an input, but emerges on suitable solutions. 
For example, it is easy to see that flat noncommutative (NC) planes $\R^{2n}_\theta$ are solutions.
More generally, one can consider geometric deformations of such brane solutions which correspond
to embedded NC branes $\cM \subset \R^{10}$. Their effective geometry can be described easily
in the semi-classical limit \cite{Steinacker:2008ri,Steinacker:2010rh}, 
resulting in a dynamical effective metric which is strongly reminiscent of the
open string metric  \cite{Seiberg:1999vs}. Since the maximally supersymmetric matrix model 
is expected to provide a good quantum theory on 4-dimensional branes, 
a physically interesting quantum theory of gravity should arise on such branes.

However, a single 4-dimensional brane $\cM^4 \subset \R^{10}$ is clearly too simple to reproduce the rich 
spectrum of phenomena in nature. In order to recover e.g. the standard model, additional structure is needed.
One possible origin of such additional structure are compactified extra dimensions, as considered 
in string theory. Another very interesting possibility are intersecting branes, which play an essential
role in recent attempts to recover the standard model from string theory. 
At present, it appears that intersections of branes with compactified extra dimensions provide the most promising 
avenue towards realistic physics, cf. \cite{Blumenhagen:2006ci}.

It is well-known that compactified extra dimensions such as fuzzy spheres $S^2_N$ arise as solutions 
of matrix models {\em with additional terms}, such as quadratic of cubic terms 
\cite{Iso:2001mg,Myers:1999ps,Alekseev:2000fd,Kimura:2001uk,Aschieri:2006uw}. This should allow
in principle to obtain sufficiently rich solutions in order to recover  the standard model.
However, the addition of such extra terms spoils much of the appeal of the matrix model: the geometry
of the space-time branes is then strongly constrained, 
and much of the essential (super)symmetries is lost.

In the present paper, we show that there are indeed solutions of the {\em undeformed} IKKT 
 model with geometry $\R^4 \times K$, where $K$ can be $S^2$, $T^2$, or
$S^3\times S^1, S^2 \times S^2,$ and $S^2 \times T^2$.
For all these solutions, the Minkowski signature of the model is essential.
Analogous solutions for the BFSS model are quite obvious
and known to some extent; however,  these solutions of the IKKT model
appear to be new. Our constructions are inspired by the 
well-known ``supertube''  or fuzzy cylinder solution 
\cite{Bak:2001kq,Janssen:2004jz,Terashima:2007uf,Shepard:2005wy,Bak:2002wy}
of the BFSS model \cite{Banks:1996vh,de Wit:1988ig}, carried over by a 
twisting procedure to the IKKT case.
These solutions should provide sufficient structure towards physically realistic solutions 
of the matrix model, along the lines of \cite{intersect-paper}.

All the solutions under consideration here have an interesting common feature: the 
noncommutative structure $\theta^{ab}$ -- which underlies all interesting matrix model solutions -- 
does not respect the compact resp. non-compact spaces in $\R^4 \times K$, but connects them in an essential
way. The structure is reminiscent of the canonical symplectic structure of cotangent bundles, 
where compact and non-compact coordinates are canonically conjugated.
This is indicated by the name   ``split noncommutativity''.

There are many  reasons why split noncommutativity is interesting. 
First, the solutions presented here only exist in the case of Minkowski signature. This is of course 
welcome from a physical point of view\footnote{This also suggests that the Euclidean model may not be suitable 
to understand the vacuum structure.}.
The underlying mechanism is that the compact extra dimensions are stabilized by (internal) angular momentum.
Moreover, there are even solutions whose
non-compact sector is in fact commutative and hence isotropic, along with an intrinsic UV cutoff. 
This would seem to resolve many of the problems associated with noncommutative field theory - 
the violation of Lorentz invariance, causality, etc. - which should be  hidden in the  
compact sector. 

However, things are not that simple. In the Minkowski case, the effective (open string) metric $G_{\mu\nu}$
has a different causality structure\footnote{As discussed in \cite{Steinacker:2010rh}, 
this change of causality structure might be avoided using 
complexified $\theta^{\mu\nu}$. A similar issue would also arise in string theory with
time-like fluxes. The appropriate treatment of this issue is unclear.}  than the naive 
embedding  (closed string)  metric $g_{\mu\nu}$. Thus the completely isotropic solutions turn out to be non-propagating
with compact time-like curves, which is clearly undesirable. 
On the other hand, 
we do obtain physically meaningful solutions with standard Minkowski metric on the non-compact
space-time, at the expense of admitting some space-time noncommutativity in the non-compact sector.

The geometrical degrees of freedom provided by the extra dimensions 
are also  welcome for the effective (emergent) gravity on such branes. 
The higher-dimensional compactification provides additional degrees of freedom
associated to gravity  due to the Poisson structure \cite{Rivelles:2002ez,Steinacker:2007dq,Yang:2006mn}, 
and fewer embedding degrees of freedom. 
Moreover, the scale and to some extent the shape
of the compact space are free moduli of the solutions, and not fixed by the model.
We will discuss these gravitational aspects only briefly in this paper, and postpone 
a systematic analysis to future work.

\section{Matrix models and their geometry} 
\label{sec:matrixmodels-intro}

We briefly collect the essential ingredients of the matrix model framework
and its effective geometry, referring 
to the recent review \cite{Steinacker:2010rh} for more details.

\subsection{The IKKT matrix model}
\label{sec:basic}

The starting point is given by a matrix model of Yang-Mills type, 
\begin{align}
S_{YM}&=-\Tr\co{X^a}{X^b}\co{X^c}{X^d}\eta_{ac}\eta_{bd}\,,
\label{S-YM}
\end{align}
where the indices run from $0$ to $9$, and $\eta_{ac} = (-1,1,...,1)$ is the invariant tensor of $SO(9,1)$. 
This is the bosonic sector of the  the 10-dimensional
maximally supersymmetric IKKT or IIB model \cite{Ishibashi:1996xs}.
The ``covariant coordinates'' 
$X^a$ are Hermitian matrices, i.e. operators acting on a separable Hilbert space $\mathcal{H}$. 
The equations of motion take the following simple form
\be
\,[X_b,[X^b,X^a]] = 0
\label{eom-IKKT}
\ee
for all $a$. Indices of matrices will be raised or lowered with $\eta_{ab}$. 
We denote the commutator of two matrices as
\begin{align}
\co{X^a}{X^b}&=i \Theta^{ab}\,.
\end{align}
We focus on matrix configurations which describe embedded
noncommutative (NC) branes. 
This means that the $X^a$ are quantized embedding functions
\be
X^a \sim x^a: \quad \mathcal{M}^{2n}\hookrightarrow \R^{10} 
\ee 
of a $2n$ dimensional submanifold, and
\begin{align}
\co{X^\m}{X^\n}&\sim i \th^{\m\n}(x)\,
\end{align}
is interpreted as quantized a Poisson structure on $\mathcal{M}^{2n}$.
Here $\sim$ denotes the 
semi-classical limit where commutators are replaced by Poisson brackets, and 
$x^\mu$ are locally independent coordinate functions chosen among the $x^a$.
Such a collection of matrices defines a quantized embedded
Poisson manifolds $(\cM^{2n},\theta^{\mu\nu})$, denoted as ``matrix geometry''. 
We will assume that $\th^{\m\n}$ is non-degenerate, so that its 
inverse matrix $\th^{-1}_{\m\n}$ defines a symplectic form 
on $\mathcal{M}^{2n}$.
The sub-manifold $\cM^{2n}\subset\R^{10}$ 
is equipped with a non-trivial induced metric
\begin{align}
g_{\m\n}(x)=\pa_\m x^a \pa_\n x^b\eta_{ab}\,,
\label{eq:def-induced-metric}
\end{align}
via the pull-back of $\eta_{ab}$. 
Finally, we define the following (effective) metric 
\begin{align}
G^{\m\n}&= \th^{\m\r}\th^{\n\s}g_{\r\s}\,, 
\label{eff-metric}
\end{align}
on $\cM^{2n}$,
dropping possible conformal factors  \cite{Steinacker:2008ri}
which are not of interest here.
It is not hard to see that the kinetic term for scalar fields on  $\cM^{2n}$
is governed by the effective metric $G_{\mu\nu}(x)$. 
The same metric also governs non-Abelian gauge fields and fermions  
on $\cM$ (up to possible conformal factors) \cite{Steinacker:2008ri,Steinacker:2008ya}, 
so that $G_{\mu\nu}$ {\em must} be interpreted as
gravitational metric. 
Since the embedding is dynamical, the model describes a dynamical
theory of gravity, realized on dynamically 
determined submanifolds of $\R^{10}$.

\subsection{The BFSS matrix model}

Although our focus is on the  IKKT model, it is very instructive to recall
also the BFSS model, which was proposed as a non-perturbative definition of M(atrix) theory \cite{Banks:1996vh},
cf. \cite{de Wit:1988ig,hoppe}.
It is a time-dependent matrix model with 9 bosonic matrices 
$X^a(t)$, and appropriate fermions. 
Rather than discussing the action, we only write down here the bosonic equations of motion:
\be
\ddot X^a + [X^b , [X_b , X^a ]] = 0 , \qquad (a,b  = 1, . . . , 9) ,
\label{eom-BFSS}
\ee
dropping all dimensionful parameters.

\section{Extra dimensions and split noncommutativity}
\label{sec:split-NC}

The basic idea of this paper is to study NC brane configurations in the matrix model with geometry
$\cM^{2n} =\cM^4 \times K$, where the noncommutative structure  mixes the spacetime $\cM^4$ with the compact 
space $K$.
This means that the
(non-degenerate) Poisson structure $\Pi$ on  $\cM^{2n}$ satisfies
$\Pi(dx\wedge dy) \neq 0$, so that it
contains terms of the form
\be
\Pi = \theta^{\mu i}(x,y) \frac{\del}{\del x^\mu} \wedge \frac{\del}{\del y^i}\quad + ...
\label{split-NC}
\ee 
where $x^\mu$ are coordinates on $\cM^4$ and $y^i$ are coordinates on $K$. 
This will be indicated by the name ``split noncommutativity''.
If $\cM$ and $K$ have the same dimension, then we may even impose $\Pi(dx^\mu\wedge dx^\nu) =0$, i.e.
$\cM$ is isotropic. 
A standard example is the canonical symplectic structure on the cotangent bundle $T^* K$.

There are several reasons why split noncommutativity is interesting. First, if $\cM^4$ is 
an isotropic submanifold, then 
- as the name indicates -  it does not carry any Poisson tensor field which could break  Lorentz invariance.
Indeed there are very strong bounds on Lorentz violation, and thus on possible Poisson background fields.
Moreover, we will see that (maximally) split noncommutativity implies an
effective UV cutoff on $\cM$, due to the NC structure on $\cM^4 \times K$; 
this will be discussed in section \ref{sec:gauge}.
Another motivation is that such a structure will allow us to find solutions of Yang-Mills matrix
models with compact extra dimensions, {\em without} any additional terms in the action that would 
break some symmetry or introduce scale parameters\footnote{There are solutions of the IKKT model which can be 
interpreted as compactification on a torus \cite{Connes:1997cr}.
However these are very different types of ``stringy'' 10-dimensional solution involving infinite (winding) sectors, 
which do not fit into the framework of embedded branes under consideration here.
The IIB model is divergent on such solutions at one loop, but it  is expected to be finite on the present 
(lower-dimensional) solutions.}. 
In particular the shape and scale parameters are {\em free} moduli, which means that 
these solutions should admit deformations with nontrivial effective 4D geometry.
This should be important for (emergent) gravity, where some of the metric degrees of freedom come from 
the brane embedding. These aspects will be discussed briefly in 
sections \ref{sec:moduli} and \ref{sec:gravity}.

\subsection{Basic example: the fuzzy cylinder}

A simple prototype  of a space with split NC
is given by the fuzzy cylinder \cite{Janssen:2004jz,Chaichian:1998kp,Bak:2001kq}
$S^1 \times_\xi \R$: 
\begin{align}
[X^1,X^3] &= i \xi X^2, & [X^2,X^3] &= - i \xi X^1,  \nn\\
 (X^1)^2 + (X^2)^2 &= R^2, & [X^1,X^2] &= 0  .
\end{align}
Defining $U := X^1+iX^2$ and  $U^\dagger := X^1-iX^2$,
this can be stated more transparently as 
\bea
U U ^\dagger &=& U^\dagger U \quad = R^2     \nn\\   
\, [U, X^3] &=& \xi U,  \qquad [U^\dagger, X^3] = - \xi U^\dagger  
 \label{fuzzy-cylinder} 
\eea
This algebra has the following 
irreducible representation\footnote{More general irreducible representations are obtained from
this basic representation by a (trivial) constant shift $X^3 \to X^3 + c$.}
\begin{align}
U |n\rangle &= R   |n + 1\rangle , \qquad   U^\dagger  |n\rangle = R  |n - 1\rangle  \nn\\
X^3 |n\rangle &= \xi n  |n\rangle,  \qquad \qquad n \in \Z, \,\, \xi \in \R
\label{cylinder-rep}
\end{align}
on a Hilbert space $\cH$, where $|n\rangle$ form an orthonormal basis. 
We take $\xi \in \R$, since  the $X^i$ are hermitian. Then
the matrices $\{X^1,X^2,X^3\}$ can be interpreted geometrically as quantized embedding functions
\be
\begin{pmatrix}
 X^1 + i X^2 \\ X^3
\end{pmatrix}
 \sim \begin{pmatrix} R e^{i y_3} \\ x^3
\end{pmatrix}: \quad  S^1 \times \R \hookrightarrow \R^3 .
\ee
This defines  the fuzzy cylinder $S^1 \times_\xi \R$. It is 
the quantization of $T^* S^1$ with canonical Poisson bracket $\{e^{iy_3},x^3\} = -i\xi e^{i y_3}$, i.e.
 $\{x^3,y^3\} = \xi$ locally.

\paragraph{Wave-functions.}

A basis of functions on $S^1 \times_\xi \R$ is given by 
\be
\{e^{i p X^3} U^n, \quad p \in [-\frac{\pi}\xi,\frac{\pi}\xi], \quad n\in \Z \} ,
\label{cylinder-functions}
\ee
so that the most general function on $S^1 \times_\xi \R$  i.e. matrix $\phi \in End(\cH)$ can be expanded as
\be
\phi = \sum_{n\in\Z} \int_{-\pi/\xi}^{\pi/\xi} d p \,\tilde \phi_n(p)\, e^{i p X^3} U^n .
\label{cylinder-decomp}
\ee
Note that the set of linear momenta $p$ is in fact compactified on a circle.
This follows from
$U e^{i p X^3} = e^{i p \xi}\, e^{i p X^3} U $, so that
\be
e^{i p X^3} \equiv e^{i (p + \frac{2\pi}\xi) X^3}
\ee
as operators on $\cH$. 
This observation is very important: it means that there is an effective UV cutoff in the momentum space
for $\R$. This is a  consequence of the uncertainty relations combined with split noncommutativity: 
since the compact space has an IR cutoff $\frac 1R$, the non-compact space has a UV cutoff $\L_{UV} = \xi^{-1}$.
This is physically very welcome, and in sharp contrast to non-compact noncommutative spaces such as the Moyal-Weyl 
$\R^2_\theta$ which has {\em no} intrinsic UV cutoff in spite of the 
uncertainty relations.  On the other hand, there is no cutoff in the winding modes $n$.
A related observation has been made in \cite{Chaichian:1998kp}.
In particular, the space of all functions in \eq{cylinder-decomp} 
(and the spectrum of the Laplacian \eq{spec-laplace-cylinder}) has the characteristics of a one-dimensional space, 
as in a 1-dimensional QFT. This is also consistent with
the relation $2\pi R\xi \,{\rm Tr} \sim {\rm Vol} (\cM)$, which has a 1-dimensional volume divergence.
The relevance of these observations to the noncommutative gauge theory on $\cM$ 
will be discussed in section \ref{sec:gauge}.

\paragraph{Matrix Laplacian.}

The matrix equations of motion \eq{eom-IKKT}, \eq{eom-BFSS} are governed by the following matrix Laplace operator
\be
\Box := [X^a,[X^b,.]] \eta_{ab} \quad \sim \, e^{\sigma} \Box_G .
\ee
We note the following useful identity
\bea
2[X^1,[X^1,\phi]] + 2[X^2,[X^2,\phi]] &=& [Z,[Z^\dagger,\phi]] +  [Z^\dagger,[Z,\phi]] \nn\\
  &\stackrel{[Z,Z^\dagger] =0}{=}& 2 [Z^\dagger,[Z,\phi]]  \, = \,2 [Z,[Z^\dagger,\phi]] 
\label{id-2}
\eea
where $Z = X^1+iX^2$. 
For the fuzzy cylinder algebra \eq{fuzzy-cylinder}, this implies
\bea
\Box X^3 &=& [X^1,[X^1,X^3]] + [X^2,[X^2,X^3]] 
 = [U,[U^\dagger,X^3]] = 0  \nn\\ 
\Box X^1 &=& [X^2,[X^2,X^1]] + [X^3,[X^3,X^1]] \nn\\
 &=& [X^3,[X^3,X^1]] = -i\xi [X^3,X^2] = \xi^2 X^1, \nn\\
\Box X^2 &=& \xi^2 X^2
\label{Laplace-cylinder-explicit}
\eea
i.e. 
\be
\Box U = \xi^2 U, \quad \Box X^3 = 0 .
\label{Laplace-cylinder-U}
\ee
Thus $X^3$ is ``harmonic'' while $X^1, X^2$ are in some sense ``massive''. 
Therefore the fuzzy cylinder is not a solution of either the IKKT or the BFSS matrix model. 
However, it is 
quite obvious how to build a corresponding solution for the BFSS model:
the cylinder should be rotating.

\subsection{Rotating cylinder solutions}
\label{sec:rotating-cylinder}

\paragraph{BFSS solution.}
 
Starting with a fuzzy cylinder $(U,X) \sim (e^{iy}, x)$ as above with NC modulus $\xi$ and radius $R$, 
define the following 3 time-dependent matrices
\be
\begin{pmatrix}
 X^1(t) + i X^2(t) \\ X^3(t)
\end{pmatrix} \,\,
= \begin{pmatrix}
   Ue^{i\xi t}  \\ X^3
  \end{pmatrix} 
 \sim\,\,  \begin{pmatrix}
 R e^{i  (y+\xi t) } \\ x 
\end{pmatrix}  .
\label{BFSS-cylinder}
\ee
It is obvious using \eq{Laplace-cylinder-U} that this gives a solution of the BFSS matrix model, 
which is well-known \cite{Bak:2001kq}.
In the semi-classical limit, 
this matrix geometry describes
$\cM \sim \R_t \times S^1 \times  \R_x$. This is a D2-brane solution
which is stabilized because the $S^1$ is rotating, extended along an arbitrary $X^3$ direction.

\paragraph{IKKT solution.}

Next we want to find a corresponding solution of the more geometric IKKT model.
One cannot apply the same trick directly, since there is no commutative time.
However based on general arguments \cite{Ishibashi:1996xs}, there should be a corresponding solution.

Consider a fuzzy cylinder $(U,X) \sim (e^{iy}, x)$ as above with NC modulus $\xi$  and radius $R$.
We can embed the non-compact direction along a light-like direction $v$. For example, 
consider the matrices $X^a, a=0,1,2,3$ defined as
\be
\begin{pmatrix}
 X^0\\  X^1 + i X^2 \\ X^3
\end{pmatrix} \,\,
= \begin{pmatrix}
   0\\   U \\ 0
  \end{pmatrix} + v^a\, X
 \sim\,\,   \begin{pmatrix}
 x \\ R e^{i y } \\ x 
\end{pmatrix} , \qquad v^a = \begin{pmatrix}
 1 \\  0 + i 0 \\ 1
\end{pmatrix} .
\label{rotating-cylinder}
\ee
where $X^0$ is the time-like direction.
This is indeed a solution of the IKKT model 
\be
\Box X^a = 0 \qquad \mbox{for} \quad v^a v^b \eta_{ab} = 0 
\ee
for any $\xi$ and $R$.
Note that this works only in the Minkowski case.
The semi-classical limit is given by the geometry $S^1 \times \R$ with Poisson structure
$\{x, R e^{i y} \} = i\xi  R e^{i y}$.
This defines the {\em propagating fuzzy cylinder}, which
is propagating in a light-like direction.  
The induced metric in the $(x,y)$ coordinates is $g_{\mu\nu} = \diag(0,R)$,
and the effective metric \eq{eff-metric} is $G^{\mu\nu} \sim \diag(R,0)$.

Several remarks are in order. First, note that  the induced metric $g_{\mu\nu}$ is degenerate.
This means that $\int d^2 x\,\sqrt{g}$ vanishes identically, i.e. there is no 
``cosmological constant''. However this applies only to the 2-dimensional case,
and the higher-dimensional generalizations below will
have non-degenerate metrics.

Furthermore, note that we obtained a compactification without adding any
other terms (such as cubic terms) and scales to the matrix model. Such additional terms would necessarily break the
$SO(9,1)$ symmetry of the model, and strongly constrain the geometry of the non-compact 
space\footnote{unless 
the fluxes arise purely dynamically, perhaps through fermion condensation; however 
such a mechanism  has not been established.}. This would be in conflict with  gravity.
Accordingly, the radius of $S^1$ as well as $\xi$ are free moduli, 
and not determined by some explicit scale or potential in the action. 
This aspect will be discussed further in section \ref{sec:moduli}.

This cylindrical solution of the IKKT model can be interpreted as a closed (D-) string. 
However, it is quite different from the BFSS solution \eq{BFSS-cylinder}: 
 the cylinder is propagating along a light-like direction  
while the $S^1$ is essentially constant. We will obtain different types of solutions below.

\subsection{Propagating plane wave solution}

The following simple solution of the IKKT model describes a 2-dimensional plane wave
which propagates along a non-compact time direction.
We first make a trivial but useful observation: 
If $[\bar X^\mu,\bar X^\nu] = i \theta^{\mu\nu}$ generate the quantum plane,
then the two matrices $(e^{i k_\mu \bar X^\mu}, \bar X^\eta)$ (for fixed $\eta$) satisfy the relations 
of a fuzzy cylinder \eq{fuzzy-cylinder}, with $R=1$ and NC modulus $\xi = - k_\mu \theta^{\mu \eta}$.

Now let $[\bar X^3,\bar X^0] = i \theta$ generate the quantum plane $\R^2_\theta$, and 
define 4 hermitian matrices as follows:
\begin{align}
\begin{pmatrix}
 X^0 \\ Z \\ X^3 
\end{pmatrix}
:= \begin{pmatrix}
 \bar X^0 \\ R\, e^{i\frac \xi\theta(\bar X^0 + \bar X^3)} \\ \bar X^3 
\end{pmatrix} 
\label{prop-plane-wave}
\end{align}
where $X^0$ is the time-like direction and $Z = X^1 + i  X^2$.
This is reminiscent of \eq{rotating-cylinder} except that 
$X^0$ and $X^3$ no longer commute. It is again easy to see that 
\be
\Box Z = 0, \qquad \Box X^0 = \Box X^3 = 0,
\ee
thus we obtained a solution of the IKKT model.
This defines the {\em propagating plane wave}.  
The semi-classical limit is given by the geometry $\R^2 \hookrightarrow \R^4$
with a plane-wave-like embedding, and  Poisson structure
$\{x^0, x^3 \} = \theta$.
The induced metric $g_{\mu\nu}$ is given by
\bea
g_{\mu\nu} &=& \eta_{\mu\nu} + \frac 12 \del_\mu \bar z\, \del_\nu z +\frac 12 \del_\mu z\, \del_\nu  \bar z ,
\eea
which in light-cone coordinates $x^\pm = x^0 \pm x^3$ is 
\be
g_{\mu\nu} 
=   \begin{pmatrix}
 R^2 \xi^2/ \theta^2 & -\frac 12 \\ -\frac 12   & 0 \end{pmatrix} .
\ee
This is flat with Minkowski signature.
The effective metric $G^{\mu\nu}$ is then also flat and Minkowski, where
the role of time and space is switched. 

Although this solution has no compactified extra dimensions, we can use 
a similar construction to generate such solutions. 
 We will start with some solution of 
the space-like matrix equation $\Box' X^a = \xi^2 X^a$, and turn it into a rotating solution of the 
IKKT model. This will be discussed next, focusing on the case  of 4 non-compact directions.

\section{Higher-dimensional compactification}

\subsection{Stabilization by angular momentum}

Assume we have a matrix geometry in the $d$-dimensional Euclidean matrix model
which satisfies
\be
\Box_Y Y^i = \xi_d^2 Y^i, \qquad \Box_Y \equiv [Y^i,[Y^j,.]]\d_{ij}, \quad i,j = 1,..., d .
\ee
There are many explicit examples corresponding to quantized compact space $K$, such as
the fuzzy sphere $S^2_N \subset \R^3$ \eq{fuzzy-sphere}, the fuzzy torus $T^2_N \subset \R^4$ \eq{fuzzy-torus}, 
fuzzy $\C P^n_N \subset \R^{n^2+2n}$ \cite{Grosse:1999ci,Balachandran:2001dd,Grosse:2004wm}, and others.
We would like to obtain a 
corresponding solution of the IKKT or BFSS matrix model,
by giving  angular momentum to $K$.

In the time-dependent BFSS model, this can be achieved simply by 
assembling the hermitian matrices into complex ones 
\be
Z^\a = Y^{2\a-1} + i Y^{2\a},
\quad \bar Z^\a = Y^{2\a-1} - i Y^{2\a}, 
\ee
and giving them a time-dependence as follows
\be
Z^\a(t)':= Z^\a e^{i\omega t}, 
\qquad \omega^2 = \xi_d^2 .
\label{time-dep-matrices}
\ee
It is obvious that this solves the matrix equations of motion
$\ddot X^a + \Box_d X^a = 0$. 
Note that the rotation may or may not be a symmetry of $K$. 
Since $a=1,...,9$ in the BFSS model, only $K\subset \R^8$ can be rotated in this way.
If we want to have solutions with the topology $\R^4 \times K$ in the IKKT model, then only $K \subset \R^6 \cong \C^3$
is admissible.

Now we describe two simple constructions which provide 
similar solutions of the IKKT model.

\subsection{Twisting via a fuzzy cylinder}

\begin{lemma} 
Suppose $(X^\mu,Y^i)$, $i=1,...,6$  are hermitian matrices which satisfy
\begin{align}
\,[X^\mu,[X^\mu,Y^i]] &= (\xi^\mu)^2\, Y^i \qquad \mbox{(no sum over $\mu$)}  \nn\\
\, \Box_Y Y^j \equiv \sum_i [Y^i,[Y^i,Y^j]] &= \xi_Y^2 \, Y^j 
\label{config}
\end{align}
Let $(X,U)$ be a fuzzy cylinder \eq{fuzzy-cylinder} with radius $1$ and NC modulus $\xi_X$, 
which commutes with the above matrices. 
Collect the $Y^i$ into complex matrices as
\begin{align}
Z^\a &= Y^{2\a-1} + i Y^{2\a}, \nn\\
\bar Z^\a &= Y^{2\a-1} - i Y^{2\a} \qquad \,\a=1,2,3 
\label{complexify}
\end{align}
and assume $[Z^\a,(Z^\a)^\dagger] = 0$ for  $\a=1,2,3$.
Then the 6 hermitian matrices ${Y^i}'$ defined via
\begin{align}
\begin{pmatrix}
{Z^1}' \\
{Z^2}' \\
{Z^3}'
\end{pmatrix}
=
\begin{pmatrix}
 Z^1 \, U^{n_1} \\
 Z^2 \, U^{n_2} \\
 Z^3 \, U^{n_3} 
\end{pmatrix}
\label{U-action-def}
\end{align}
satisfy 
\begin{align}
\,[X^\mu,[X^\mu,{Y^i}']] &= (\xi^\mu)^2\, {Y^i}'  \qquad \mbox{(no sum over $\mu$)},  \label{lemma1-1}\\
\,[X,[X,{Z^\a}']] &= n_\a^2 \xi_X^2 \,{Z^\a}' \label{lemma1-2}\\
\Box_{Y'} {Y^j}' &= \xi_Y^2 \, {Y^j}'    \label{lemma1-3}\\
\, \Box_{Y'} X  &= 0   \label{lemma1-4}\\
\, \Box_{Y'} X^\mu  &= \, \Box_{Y} X^\mu 
\label{lemma1-5}
\end{align}
\end{lemma}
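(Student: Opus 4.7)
The plan is to exploit the fact that the cylinder generators $(X, U)$ are assumed to commute with all $X^\mu$ and $Y^i$, so that the twist $U^{n_\a}$ interacts non-trivially only with $X$ itself. Before tackling any of the five identities, I would record two preparatory facts. First, from $[U,X] = \xi_X U$ and the Leibniz rule one gets $[X, U^n] = -n \xi_X U^n$ for every integer $n$ (both signs, using $UU^\dagger = 1$). Second, since $U$ commutes with $Z^\a$ and $(Z^\a)^\dagger$, a direct calculation gives
\[
[{Z^\a}', ({Z^\a}')^\dagger] = [Z^\a U^{n_\a}, U^{-n_\a}(Z^\a)^\dagger] = [Z^\a, (Z^\a)^\dagger] = 0,
\]
so the identity \eqref{id-2} also applies to the primed pairs ${Y^{2\a-1}}', {Y^{2\a}}'$.

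With these in hand, \eqref{lemma1-1} is immediate: $X^\mu$ commutes with $U$, hence $[X^\mu,[X^\mu, Z^\a U^{n_\a}]] = [X^\mu,[X^\mu,Z^\a]] U^{n_\a}$, and by linearity of the assumption in $Y^i$ the bracket returns $(\xi^\mu)^2 Z^\a U^{n_\a}$; taking real and imaginary parts gives the statement for ${Y^i}'$. Identity \eqref{lemma1-2} follows from $[X,{Z^\a}'] = Z^\a [X, U^{n_\a}] = -n_\a \xi_X {Z^\a}'$, which squares to the claimed eigenvalue $n_\a^2\xi_X^2$.

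For \eqref{lemma1-3}, I would rewrite $\Box_{Y'}$ in $Z'$-form using \eqref{id-2} (justified by the vanishing commutator above), compute
\[
[{Z^\a}', {Z^\b}'] = [Z^\a, Z^\b] U^{n_\a+n_\b},
\qquad [({Z^\a}')^\dagger, {Z^\b}'] = U^{-n_\a}[(Z^\a)^\dagger, Z^\b] U^{n_\b},
\]
and observe that in every iterated commutator the $U^{\pm n_\a}$ factors pass through and cancel, leaving an overall $U^{n_\b}$ on the right. The remaining $Z$-commutators reassemble into $\Box_Y Z^\b = \xi_Y^2 Z^\b$, giving $\Box_{Y'} {Z^\b}' = \xi_Y^2 {Z^\b}'$. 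For \eqref{lemma1-4}, I would combine \eqref{lemma1-2} with its dagger, $[({Z^\a}')^\dagger, X] = -n_\a \xi_X ({Z^\a}')^\dagger$, and observe that both double brackets $[{Z^\a}',[({Z^\a}')^\dagger, X]]$ and $[({Z^\a}')^\dagger,[{Z^\a}', X]]$ collapse to multiples of $[{Z^\a}', ({Z^\a}')^\dagger] = 0$. Finally, \eqref{lemma1-5} follows because $U$ commutes with $X^\mu$, so the $U^{\pm n_\a}$ factors in $[{Z^\a}',[({Z^\a}')^\dagger, X^\mu]]$ cancel outright and the expression reduces to $[Z^\a,[(Z^\a)^\dagger, X^\mu]]$, reassembling to $\Box_Y X^\mu$.

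None of the steps is really hard; the only place one could slip is in checking that the $U^{\pm n_\a}$ do cancel cleanly in the double commutators for \eqref{lemma1-3}, which relies crucially on $U$ being central with respect to all of $X^\mu, Y^i$ and on the assumption $[Z^\a,(Z^\a)^\dagger]=0$. Everything else is an application of the Leibniz rule together with identity \eqref{id-2}.
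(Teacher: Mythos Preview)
Your proof is correct and follows essentially the same route as the paper: both arguments pass to the complex $Z^\a$ variables, invoke identity \eqref{id-2}, and use that $U$ commutes with all $X^\mu, Y^i$ so that the $U^{\pm n_\a}$ factors slide through and cancel in every double commutator. Your explicit observation that $[{Z^\a}',({Z^\a}')^\dagger]=0$ is exactly what the paper uses (slightly less explicitly) to conclude \eqref{lemma1-4}.
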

\begin{proof}
\eq{lemma1-1} and \eq{lemma1-2}  are immediate using the $Z^\a$ variables. \eq{lemma1-3} 
can be seen using the identity \eq{id-2}, e.g.
\bea
2[{Y^1}',{[Y^1}',{Z^\a}']] + 2[{Y^2}',[{Y^2}',{Z^\a}']] &=& [Z^1 U^{n_1},[{U^{-n_1}} {Z^1}^\dagger,{Z^\a}']] 
+  [{U^{-n_1}} {Z^1}^\dagger,[Z^1 U^{n_1},{Z^\a}']] \nn\\
&=& [Z^1 ,[{Z^1}^\dagger,{Z^\a}' ]]  +  [ {Z^1}^\dagger,[Z^1 ,{Z^\a}' ]]
\eea
because $U$ commutes with $Z^\a$ and ${Z^\a}'$. 
The heuristic reason is that the twisting amounts to an orthogonal 
transformation, which leaves the matrix Laplacian invariant.
The same computation with $X^\mu$ instead of ${Z^j}'$, e.g.
\be
[{Y^1}',[{Y^1}',X^\mu]] + [{Y^2}',[{Y^2}',X^\mu]] = [{Y^1},[{Y^1},X^\mu]] + [{Y^2},[{Y^2},X^\mu]]
\ee
gives \eq{lemma1-5}. 
Similarly, \eq{lemma1-4} follows from e.g.
\bea
2[{Y^1}',{[Y^1}',X]] + 2[{Y^2}',[{Y^2}',X]] &=& [Z^1 U^{n_1},[{U^{-n_1}} {Z^1}^\dagger,X]] 
+  [{U^{-n_1}} {Z^1}^\dagger,[Z^1 U^{n_1},X]] \nn\\
 &=& -n_1\xi_X[Z^1 U^{n_1},{U^{-n_1}} {Z^1}^\dagger] + n_1\xi_X [{U^{-n_1}} {Z^1}^\dagger,Z^1 U^{n_1}] \nn\\
&=& 0
\eea
using \eq{id-2} and \eq{fuzzy-cylinder}, because $Z^\a$  commutes with  the fuzzy cylinder $(U,X)$
and $[Z^\a,(Z^\a)^\dagger] = 0$.
\end{proof}

To understand the geometrical significance, assume that the original matrices $X^a = (X^\mu,Y^i)$
describe a quantized embedding $X^a \sim x^a:\,\,\cM \hookrightarrow \R^D$ of a symplectic
manifold $(\cM,\theta^{-1}_{\mu\nu})$.
Let $(U,X) \sim (e^{i y},x)$ be the fuzzy cylinder. 
 Then the above construction in the semi-classical limit amounts to a map
\begin{align}
 \R \times S^1  \times \cM &\rightarrow \cM' \subset \R^{D+1}  \nn\\
(x,e^{i y},p) &\mapsto (x,e^{i y}\cdot p)
\label{product-map}
\end{align}
where $e^{i y}\cdot p$ stands for the  $S^1$  action on $\cM$ corresponding to \eq{U-action-def},
and
\be
(X,X^\mu,{Y^i}') \sim (x,{x^a}'): \quad \cM' \hookrightarrow \R^{D+1}  
\label{new-matrix-embedding}
\ee
is a quantized embedding map for $\cM'$. 
The corresponding Poisson structure on $\cM'$ is the push-forward of the Poisson structures 
$\{e^{i y_3},x_3\} = -i\xi e^{i y_3}$ on $\R \times S^1$ and $\theta^{\mu\nu}$ on $\cM$
via \eq{product-map}. One must distinguish two cases. 
First,  if the action $e^{i y}\cdot p$ defines a flow on $\cM$, then 
$\cM'$ is odd-dimensional. The Poisson structure is thus degenerate, with
symplectic leaves labeled by the eigenvalues of some central function. 
We will give an example below.
Second, if the map \eq{product-map} is free i.e. (at least locally) a diffeomorphism
(hence $e^{i y}\cdot p$ does not preserve $\cM$), then the image becomes a symplectic manifold, 
with  quantized embedding map given by \eq{new-matrix-embedding}.

Note  that one can apply a $SO(D)$ transformation on the $Y^i$ before
defining the complex combinations \eq{complexify}; this will be exploited below.
Similarly,  the non-compact direction of the
added cylinder $\R\times S^1$ may be oriented along an arbitrary direction, which maybe
space-like, time-like, or light-like.
Finally, some of the matrices $Y^i$ are allowed to vanish $Y^i = 0$. 
This will be useful below.

\subsection{Twisting via a plane wave}

Recall that
if $[\tilde X^\mu,\tilde X^\nu] = i \theta^{\mu\nu}$ generate the quantum plane $\R^2_\theta$,
then the two matrices $(e^{ik_\mu\tilde  X^\mu}, \tilde X^\eta)$ (for fixed $\eta$) satisfy the 
relations
of a fuzzy cylinder \eq{fuzzy-cylinder}, with $R=1$ and NC modulus $\xi = - k_\mu \theta^{\mu \eta}$.
We thus obtain the following analog of lemma 1:

\begin{lemma} 
Suppose $(X^\mu,Y^i)$, $i=1,...,6$  are hermitian matrices which satisfy
\begin{align}
\,[X^\mu,[X^\mu,Y^i]] &= (\xi^\mu)^2\, Y^i \qquad \mbox{(no sum over $\mu$)}  \nn\\
\, \Box_Y Y^j \equiv \sum_i [Y^i,[Y^i,Y^j]] &= \xi_Y^2 \, Y^j 
\label{config-twist}
\end{align}
Let $[\tilde X^\mu,\tilde X^\nu] = i \tilde\theta^{\mu\nu},\,\, \mu,\nu = 0,1$ generate a quantum plane $\R^2_\theta$
which commutes with the above matrices, and let $U= e^{ik_\mu \tilde X^\mu}$.
Then the 6 hermitian matrices ${Y^i}'$ defined as in \eq{complexify}, \eq{U-action-def}
satisfy 
\begin{align}
\,[X^\mu,[X^\mu,{Y^i}']] &= (\xi^\mu)^2\, {Y^i}'  \qquad \qquad \mbox{(no sum over $\mu$)},  \label{lemma2-1}\\
\,\tilde\eta_{\mu\nu} [\tilde X^\mu,[\tilde X^\nu,{Z^\a}']] &= n_\a^2  \, (k \cdot k)\, {Z^\a}',
 \qquad \mu,\nu = 0,1 \label{lemma2-2}\\
\Box_{Y'} {Y^j}' &= \xi_Y^2 \, {Y^j}'    \label{lemma2-3}\\
\, \Box_{Y'} \tilde X^\mu  &= 0 , \qquad 
\, \Box_{Y'} X^\mu  = \, \Box_{Y} X^\mu 
\label{lemma2-5}
\end{align}
provided $[Z^\a,(Z^\a)^\dagger] = 0$ for  $\a=1,2,3$.
Here
\be
 k \cdot k := \tilde G^{\mu\nu} k_\mu k_\nu   ,
 \qquad \tilde G^{\mu\nu} = \tilde\theta^{\mu\m'}\tilde\theta^{\nu\nu'}\tilde\eta_{\mu'\nu'} .
\ee
\end{lemma}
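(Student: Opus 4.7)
The plan is to follow the template of Lemma 1 essentially verbatim, with the fuzzy cylinder $(U,X)$ replaced by $(U, \tilde X^\mu)$, and then to compute the one genuinely new piece — equation \eqref{lemma2-2} — using the quantum plane commutator.

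First I would establish the one basic commutator on which everything rests: since $[\tilde X^\mu,\tilde X^\nu] = i\tilde\theta^{\mu\nu}$ is central, the double commutator $[\tilde X^\mu,[\tilde X^\nu,\tilde X^\rho]]$ vanishes, so a short BCH computation gives
\be
[\tilde X^\mu, U^n] \;=\; n\,k_\rho\,\tilde\theta^{\rho\mu}\, U^n.
\ee
In addition, $U = e^{ik_\mu\tilde X^\mu}$ commutes with everything that commutes with the $\tilde X^\mu$, in particular with $Z^\alpha$, $(Z^\alpha)^\dagger$ and $X^\mu$. These two facts are the only structural inputs beyond those used in Lemma 1.

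Equations \eqref{lemma2-1}, \eqref{lemma2-3} and the second half of \eqref{lemma2-5} are then word-for-word copies of the corresponding arguments in Lemma 1. For \eqref{lemma2-1} and $\Box_{Y'}X^\mu = \Box_Y X^\mu$, the matrices $X^\mu$ commute with $U$, so the $U^{\pm n_\alpha}$ factors slide through and cancel, and the identity \eqref{id-2} then reduces each pair $({Y^{2\alpha-1}}',{Y^{2\alpha}}')$ to its unprimed counterpart; the hypothesis $[Z^\alpha,(Z^\alpha)^\dagger]=0$ ensures $[{Z^\alpha}',({Z^\alpha}')^\dagger]=0$, which is what is needed to apply \eqref{id-2}. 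For $\Box_{Y'}\tilde X^\mu=0$, I would imitate the proof of \eqref{lemma1-4}: now $[U^{\pm n_\alpha},\tilde X^\mu] = \mp n_\alpha k_\rho\tilde\theta^{\rho\mu}U^{\pm n_\alpha}$ is non-zero, but the two contributions from $[{Z^\alpha}',[({Z^\alpha}')^\dagger,\tilde X^\mu]]$ and $[({Z^\alpha}')^\dagger,[{Z^\alpha}',\tilde X^\mu]]$ are each proportional to $[Z^\alpha,(Z^\alpha)^\dagger]$, hence vanish.

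The genuinely new computation is \eqref{lemma2-2}. Using the basic commutator above and $[\tilde X^\mu, Z^\alpha]=0$, I find
\be
[\tilde X^\mu,{Z^\alpha}'] \;=\; Z^\alpha\,[\tilde X^\mu,U^{n_\alpha}] \;=\; n_\alpha\, k_\rho\,\tilde\theta^{\rho\mu}\,{Z^\alpha}',
\ee
and iterating
\be
\tilde\eta_{\mu\nu}[\tilde X^\mu,[\tilde X^\nu,{Z^\alpha}']] \;=\; n_\alpha^{\,2}\,k_\rho k_\sigma\,\tilde\theta^{\rho\mu}\tilde\theta^{\sigma\nu}\tilde\eta_{\mu\nu}\,{Z^\alpha}' \;=\; n_\alpha^{\,2}\,\tilde G^{\rho\sigma}k_\rho k_\sigma\,{Z^\alpha}' \;=\; n_\alpha^{\,2}\,(k\!\cdot\!k)\,{Z^\alpha}',
\ee
which is exactly \eqref{lemma2-2}.

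There is no real obstacle here; the only subtlety is bookkeeping. The step most prone to sign or index errors is \eqref{lemma2-2}, where one must be careful that contracting the two $\tilde\theta$'s with $\tilde\eta_{\mu\nu}$ produces the effective (open-string) metric $\tilde G^{\rho\sigma}$ and not its inverse. Everything else is a mechanical translation of Lemma 1 in which the non-compact direction $X$ of the auxiliary cylinder is replaced by a two-dimensional quantum plane, with $n_\alpha\xi_X$ promoted to the vector quantity $n_\alpha k_\rho\tilde\theta^{\rho\mu}$.
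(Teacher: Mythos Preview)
Your proposal is correct and matches the paper's approach: the paper's one-line proof simply invokes Lemma~1 for the fuzzy cylinder algebras $(e^{ik_\mu\tilde X^\mu},\tilde X^\eta)$, which is precisely what you do, with your explicit derivation of \eqref{lemma2-2} amounting to applying \eqref{lemma1-2} for each $\eta$ and contracting with $\tilde\eta_{\mu\nu}$. The bookkeeping and the role of $[Z^\alpha,(Z^\alpha)^\dagger]=0$ are handled correctly.
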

\begin{proof}
This follows easily from lemma 1 applied to the fuzzy cylinder algebras $(e^{ik_\mu\tilde  X^\mu}, \tilde X^\eta)$. 
\end{proof}

The geometrical significance of this construction is clear:   $(\tilde X^\mu, X^\mu,Y^i)$
describes a quantized embedding 
$\R^2_\theta \times \cM \hookrightarrow \R^D$ with non-trivial Poisson structure along some 
non-compact direction $\R^2_\theta$, 
and the modified embedding ${Y^i}'$ induces a rotation  
along this $\R^2_\theta$. 

With these constructions at hand, we can obtain new solutions of the IKKT model
with compact extra dimensions stabilized by angular momentum, analogous to \eq{time-dep-matrices}.
While adding a cylinder will lead to un-desired closed time-like circles, adding the 
plane waves will give the desired compactifications which propagate along a non-compact 
direction.

\subsection{Higher-dimensional cylindrical solutions}
\label{sec:higher-cylinder}

\paragraph{BFSS solutions $\R_t \times \R^n \times_\xi T^n$ for $n \leq 3$.}

As an example, we can take 3 mutually commuting copies of the fuzzy cylinder 
$S^1 \times_{\xi_i} \R$ realized by $(X^i,U_i)$ for $i=1,2,3$ as 
defined above, and give the $U_i$ a time-dependent factor $e^{i\omega_i t}$ with $\omega_i^2 = \xi_i^2$.
Thus define 9 hermitian time-dependent matrices as follows
\begin{align}
\begin{pmatrix}
 X^i(t), \,\, i = 1,2,3 \\
Z^1(t) \\
Z^2(t) \\
Z^3(t)
\end{pmatrix}
=
\begin{pmatrix}
 X^i ,\,\, i = 1,2,3 \\
 U_1 e^{i \omega_1 t} \\
U_2 e^{i \omega_2 t}\\
U_3  e^{i \omega_3 t}
\end{pmatrix}
\end{align}
where $Z^1 = X^4+iX^5$ etc.
This clearly gives a solution of the BFSS matrix model
of type $\R_t \times \R^3 \times_\xi T^3$, interpreted as 3 rotating cylinders. 
Note that the noncommutative structure is indeed split 
as discussed previously, and the $\R^4$ subspace is commutative. 

There are obviously many variations of this solution,
such as $\R_t \times \R^2 \times_\xi T^2$, or with  different winding numbers by replacing $U_i \to U_i^{n_i}$
compensated by $\omega_i \to n_i \omega_i$.
It is also possible to replace one  $T^2$ by a noncommutative torus $T^2_N \subset \R^4$ \eq{fuzzy-torus}, and obtain e.g.
$\R_t \times \R \times_\xi S^1 \times T^2_N$.

\paragraph{Non-propagating IKKT solution $\R^n \times_\xi T^n$ for $n\leq 3$.}

Now we want to construct similar solutions of the IKKT model, based on lemma 1. 
Consider again 3 mutually commuting fuzzy cylinders $S^1 \times_\xi \R$ with NC modulus $\xi_i$
and radius $R_i$,
realized by $(X^i,U_i)$ for $i=1,2,3$,
embedded along space-like directions:
\begin{align}
\begin{pmatrix}
 X^0 \\
 X^i, \,\, i = 1,2,3 \\
Z^1 \\
Z^2 \\
Z^3
\end{pmatrix}
=
\begin{pmatrix}
 0 \\
 X^i ,\,\, i = 1,2,3 \\
 U_1  \\
U_2  \\
U_3 
\end{pmatrix}
\label{3-cylinders}
\end{align}
where $Z^1 = X^4+iX^5$ etc. Now we add  a ``time-like''  fuzzy cylinder $(U_0,X^0) \sim (e^{i y_0},x_0)$ 
with NC modulus $\xi_0$,
which commutes with the remaining generators. Thus define 10 new
hermitian  matrices as follows
\begin{align}
\begin{pmatrix}
 {X^0}' \\
 {X^i}', \,\, i = 1,2,3 \\
{Z^1}' \\
{Z^2}' \\
{Z^3}'
\end{pmatrix}
=
\begin{pmatrix}
 X^0 \\
 X^i ,\,\, i = 1,2,3 \\
 U_1 U_0 \\
U_2 U_0 \\
U_3 U_0
\end{pmatrix}
\end{align}
Now  lemma 1 can be applied along with \eq{Laplace-cylinder-explicit}, which implies that
\be
\Box' {Z^i}' = (\xi_i^2 - \xi_0^2) {Z^i}', \qquad \Box' {X^\mu}' = 0
\ee
Therefore we obtain a solution of the IKKT model for $\xi_0^2 = \xi_i^2$. 
Solutions with different winding numbers $n_i$ can be obtained by adjusting the $\xi_i$ accordingly.
However, according to the discussion below lemma 1 there is a constraint.
The central generator is easily identified as
\be
{X^0}'-{X^1}'-{X^2}'-{X^3}' = C .
\ee
Therefore the symplectic leaves define 
D5-branes\footnote{In accord with the string literature we denote $n+1$ - dimensional submanifolds 
with Minkowski signature as $Dn$ branes.} with the structure $\cM \sim \R^3 \times T^3$, compactified along
$C=const$. In particular, the non-compact space $\R^4$ is completely isotropic.
 However, such irreducible solutions can be obtained more directly:

\paragraph{Non-propagating IKKT solution $\R^4 \times_\xi S^3 \times S^1$.}

We modify the above construction in order to avoid the degenerate Poisson structure. 
Starting again with \eq{3-cylinders} and coinciding $\xi_i \equiv \xi$, we define
\be
 \tilde Z^1 = X^4 + i X^6, \quad \tilde Z^2 = X^5 + i X^7,  \quad \tilde Z^3 = X^8 + i X^9
\label{permutation}
\ee
(note that $X^5$ and $X^6$ have been interchanged).
This amounts to an orthogonal transformation among the $X^4,...,X^9$.
Then clearly the new $X^1,..., X^9$ still
define $\R^3 \times T^3$ and satisfy $\tilde\Box \tilde X^a = \xi^2 \tilde X^a, a=4,...,9$ and 
$\tilde\Box \tilde X^i =0, \,\, i = 1,2,3$.
Now let them rotate again by 
adding a time-like cylinder $(U_0,X^0) \sim (e^{i y_0},x_0)$ 
with NC modulus $\xi_0$
as follows
\begin{align}
\begin{pmatrix}
 {X^0}' \\
 {X^i}', \,\, i = 1,2,3 \\
{Z^1}' \\
{Z^2}' \\
{Z^3}'
\end{pmatrix}
=
\begin{pmatrix}
 X^0 \\
 X^i ,\,\, i = 1,2,3 \\
 \tilde Z^1 U_0 \\
\tilde Z^2 U_0 \\
\tilde Z_3 U_0
\end{pmatrix} .
\label{solution-T4}
\end{align}
As above, this is a solution $ \Box' {X^a}' = 0$ provided $\xi_0 = \xi$.
The point is that now the time-like  rotation does not preserve $T^3$, but 
sweeps out $S^3 \times S^1$. Therefore
the above matrices define a quantized $\R^4 \times S^3\times S^1$,
with classical coordinates $x^\mu,y_\nu$ and Poisson structure $\{x^\mu,e^{iy_\nu}\} = \xi\,\delta^{\mu}_{\nu} e^{iy_\nu}$.
In particular, the non-compact space $\R^4$ is completely isotropic.

Now consider the induced metric, which in the $\zeta^A = (x^\mu,y_\nu)$ coordinates is given by 
\be
g_{AB} = \begin{pmatrix}
              \eta_{\mu\nu} & 0 \\
              0 & R^2\d_{\mu\nu}
             \end{pmatrix} .
\ee
Therefore the effective metric \eq{eff-metric}  is
\be
G^{AB} \sim \theta^{A A'} \theta^{BB'} g_{A'B'} 
  = \xi^2 \begin{pmatrix}
              R^2\d_{\mu\nu} & 0 \\
              0 & \eta_{\mu\nu}
             \end{pmatrix} .
\label{eff-metric-T4}
\ee 
This has indeed Minkowski signature, however 
the time-like direction is now in the compact space. This change of causality structure is a typical phenomenon 
in the context of emergent gravity (which should also occur for the open string metric in similar contexts). 
One possibility to avoid this is 
to consider complexified Poisson structures corresponding 
to complexified matrices, as discussed in \cite{Steinacker:2010rh}. However in the present paper, we 
insist that all $X^a$ are hermitian matrices, so that \eq{eff-metric-T4} must be taken serious.
In that case, the time-like directions are compactified, and there is no propagation along the 
non-compact space $\R^4$. This will apply in particular for the lowest Kaluza--Klein modes. 
Therefore these solutions are interesting but unphysical, and we must look for solutions 
with Minkowski signature on the non-compact space. Such solutions will be found below,
using a twist along a quantum plane.

There are obviously many variations of this solution. By letting some cylinders degenerate
we obtain solutions $\R^n \times_\xi T^n$ for $n =2,3$. In particular, 
the reduced form of \eq{3-cylinders} can be recovered in this way. 
We can also introduce different winding numbers $U_i^{n_i}$ provided the
$\xi_i$ are adjusted accordingly. 
Finally, it is instructive to note that one can also use the semi-classical result 
$\Box \sim e^\sigma \Box_G$ \cite{Steinacker:2010rh} to see that \eq{solution-T4} is a 
(semi-classical) solution of the model.

\subsection{Propagating cylindrical IKKT solutions.}
\label{sec:propagating-cylinder}

In order to obtain an effective metric $G^{\mu\nu}$ which has Minkowski signature in 
the non-compact directions, we will use the construction in lemma 2 with Minkowski signature on
$\R^2_\theta$.

\paragraph{Propagating $\R^3 \times S^1$ and $\R^4 \times T^2$.}

As a first example, we start with a fuzzy cylinder $(U,X^2)$ with NC parameter $\xi$
and radius $R$, and twist it
with the noncommutative plane wave $[X^\mu,X^\nu] = i \theta^{\mu\nu}, \,\mu = 0,1$
(which commutes with the cylinder) as follows
\begin{align}
\begin{pmatrix}
 X^\mu, \,\, \mu = 0,1 \\
 X^2 \\
Z
\end{pmatrix}
=
\begin{pmatrix}
 X^\mu, \,\, \mu = 0,1 \\
 X^2 \\
U\, e^{i k_\mu X^\mu}
\end{pmatrix}
\label{R3T1-solution}
\end{align}
where $Z =  {X^4} + i {X^5}$.
Then lemma 2 gives
\begin{align}
\sum_{\mu,\nu = 0,1,2} \eta_{\mu\nu}[X^\mu,[X^\nu, Z]] &= 
 (\xi^2 +  k \cdot k)\, Z
\qquad\quad   k \cdot k := \tilde G_{(2)}^{\mu\nu} k_\mu k_\nu \nn\\
\sum_{i=4,5} \, [{X^i},[{X^i},X^\mu]]  &= 0, \qquad \mu = 0,1  
\end{align}
and $\sum_{i=4,5} \, [{X^i},[{X^i},X^2]]  = 0$ due to the fuzzy cylinder.
Here $\tilde G_{(2)}^{\mu\nu} = \theta^{\mu\mu'} \theta^{\nu\nu'} \eta_{\mu'\nu'}, \,\mu,\nu = 0,1$ 
denotes the 2-dimensional 
contribution to the effective metric.
Therefore we obtain a solution 
\be
\Box X^a = 0\quad \mbox{for} \qquad  k \cdot k = - \xi^2  .
\ee
In particular, $k$ must be time-like w.r.t. $\tilde  G_{\mu\nu}$.
These matrices define a quantization of $\R^3 \times S^1$,  parametrized by $\zeta^A = (x^\mu,x^2,y_2)$.
The Poisson structure and the induced metric are given by
\be
\{\xi^A,\xi^B\}  = \begin{pmatrix}
 \theta^{\mu\nu} & 0 & \tilde k^\mu  \\
 0 & 0 & \xi \\
 - \tilde k^\mu & - \xi & 0
\end{pmatrix}, \qquad 
g_{AB} = \begin{pmatrix}
              \eta_{\mu\nu} & 0 & 0 \\
              0 & 1 & 0 \\
              0 & 0 &  R^2
             \end{pmatrix} .
\ee
where
\be
\tilde k^\mu =  \theta^{\mu\nu} k_\nu .
\ee
Therefore the effective metric is
\be
 G^{AB} =  \theta^{A A'} \theta^{BB'} g_{A'B'} 
  =  \begin{pmatrix}
           \tilde G_{(2)}^{\mu\nu} + R^2\tilde k^\mu \tilde k^\nu & -R^2\xi \tilde k^\mu & \tilde G_{(2)}^{\mu\nu} k_\nu   \\
            -R^2\xi \tilde k^\nu & R^2\xi^2 & 0 \\
            \tilde G_{(2)}^{\mu\nu} k_\nu  &  0 & \tilde k^\mu\tilde k^\nu \eta_{\mu\nu} + \xi^2
             \end{pmatrix}  . \nn\\
\label{eff-metric-R3T1}
\ee 
Now the effective metric restricted to the non-compact $\R^3$  is given by
\be
G^{AB}_{\R^3} =  R^2 \begin{pmatrix}
           R^{-2}\tilde G_{(2)}^{\mu\nu} + \tilde k^\mu \tilde k^\nu & -\xi \tilde k^\mu \\
            -\xi \tilde k^\nu & \xi^2 
             \end{pmatrix} ,
\label{horiz-metric-3}
\ee
which has Minkowski signature, cf. \eq{eff-3D-prop-diag}. 
This is indeed the metric which governs the lowest KK modes, 
as shown in section \ref{sec:spec-Mink-cylinder}.
Therefore this solution can serve as 
physical space-time with compactified extra dimensions.

An analogous construction using 2 cylinders gives a solution of type 
$\R^2_\theta \times \R^2 \times_\xi T^2$ in terms of  8 matrices, with Minkowski signature in the 
non-compact space.

\paragraph{Propagating $\R^4 \times S^3 \times S^1$.}

In order to get solutions with 4 compact and 4 non-compact dimensions, 
we can add a compact fuzzy space such as $T^2_N$ or $S^2_N$ and then twist the whole 
construction.

We consider a variant of the construction in \eq{solution-T4}, starting with a  
fuzzy cylinder  $(U_2,X^2)$ with NC parameter $\xi$,
and a fuzzy torus $U,V$ as in \eq{fuzzy-torus} with $\xi^2 = 4 \sin^2(\frac\pi N)$. Thus define 
\begin{align}
X^a = \begin{pmatrix}
 X^2 \\
Z^1 \\ 
Z^2 \\
Z^3 \\
\end{pmatrix}
=
\begin{pmatrix}
 X^2 \\
U_2\, \\
U  \\
V  
\end{pmatrix}
\end{align}
which satisfies $\Box Z^\a = \xi^2 Z^\a$ using \eq{Box-torus}, and $\Box X^2 = 0$.
Now apply again an orthogonal transformation $Z^\a \to \tilde Z^\a$
as in \eq{permutation}, 
\be
 \tilde Z^1 = X^4 + i X^6, \quad \tilde Z^2 = X^5 + i X^7,  \quad \tilde Z^3 = X^8 + i X^9
\label{permutation-2}
\ee
(interchanging $X^5$ and $X^6$). We then add 
another fuzzy cylinder $(U_3,X^3)$ with NC modulus $\xi_3$, and finally twist the compact components
with a $\R^2_\theta$ with Minkowski signature:
\begin{align}
{X^a}' = \begin{pmatrix}
 {X^\mu}',\,\, \mu = 0,1  \\
 {X^i}' ,\,\, i = 2,3 \\
{Z^1}' \\
{Z^2}' \\
{Z^3}'
\end{pmatrix}
=
\begin{pmatrix}
 X^\mu \\
 X^i ,\,\, i = 2,3 \\
 \tilde Z^1 \, U_3 \, e^{i k_\mu X^\mu} \\
\tilde Z^2 \, U_3 \, e^{i k_\mu X^\mu} \\
\tilde Z_3\,  U_3\, e^{i k_\mu X^\mu} 
\end{pmatrix} .
\label{solution-T4-twisted}
\end{align}
Now there are no degeneracies due to the permutation \eq{permutation-2}, therefore
these matrices define a matrix quantization of $\R^4 \times S^3\times S^1$. 
Using lemmas 1 and 2, they provide a solution of the IKKT model 
for
\be
\Box' X^a = 0 \qquad \mbox{if} \quad k \cdot k = - (\xi^2 + \xi_3^2)  .
\ee
We can of course omit the last cylinder and obtain a solution $\R^3 \times T^3$.
Furthermore, it is clear that the fuzzy torus in the above construction can be replaced by a fuzzy sphere,
which gives a propagating IKKT solution of type  $\R^4 \times T^2 \times S^2_N$.
This will be given explicitly  in the next section.

\subsection{Spherical extra dimensions}

We finally provide some illustrative solutions with fuzzy spheres in the extra dimensions. 
Solutions with fuzzy spheres have been obtained up to now only upon
adding extra terms to the  matrix models, notably certain cubic terms; this is clearly 
undesirable. We show here how extra-dimensional fuzzy spheres can arise as solutions of the 
un-modified IKKT  model.

\paragraph{BFSS solution $\R_t \times \R^2_\theta \times \R\times_\xi S^1 \times S^2\times S^2$.}

Consider  2 mutually commuting fuzzy spheres  \eq{fuzzy-sphere}
 $S^2_L \times S^2_R \subset \R^6$ realized as follows 
\begin{align}
\, [Y^a_L, Y^b_L] &= i c_{L} \varepsilon^{abc} Y^c_L , \qquad \sum_a Y^a_L Y^a_L = R_L^2,  \nn\\
\, [Y^a_R, Y^b_R] &= i c_{R} \varepsilon^{abc} Y^c_R , \qquad \sum_a Y^a_R Y^a_R = R_R^2, \nn\\
\, [Y^a_L, Y^b_R] &= 0 ,
\label{2-spheres}
\end{align}
interpreted as quantized embedding maps $S^2_L \times S^2_R \hookrightarrow \R^3 \oplus \R^3$.
For a suitable choice of parameters these 6 matrices satisfy
\be
\Box_Y Y^a = \xi^2 Y^a .
\ee
This can be turned into a rotating solution of the BFSS model of type 
$\R_t \times S^2 \times S^2$ as above, 
by imposing a rotation $e^{i \omega t}$ in $\R^6$. 
However, we would  like to add  3 more non-compact directions $\R^3$.
One possibility is to add a (commuting) copy of the fuzzy cylinder
 $S^1 \times_\xi \R$ realized by $(X^3, U_3)$,  and an $\R^2_\theta$.  
This can be done as follows
\begin{align}
\begin{pmatrix}
 X^i(t), \,\, i = 1,2 \\
X^3(t) \\
Z^a(t)
\end{pmatrix}
=
\begin{pmatrix}
 X^i ,\,\, i = 1,2 \\
X^3 \\
 (Y^a_L + i Y^a_R) U_3 e^{i \omega t} \\
\end{pmatrix}
\end{align}
where $[X^1,X^2] = i \theta \one$ and 
with $\omega^2 = 2\xi^2$.
Since $[Y^a_L,Y^b_R]=0$, it follows as before that 
$(Z^a,X^3)$ is a fuzzy cylinder algebra for each $a=1,2,3$, and  
\begin{align}
\Box X^i = 0, \quad \mbox{for} \quad  i=1,2,3 , \qquad
\Box X^a = 2\xi^2 X^a\qquad \mbox{for} \quad  a=4,5,6,7,8,9 .
\end{align}
Intuitively, the matrices  $(Y^a_L + i Y^a_R) U_3$
still define $S^2_N \times S^2_N$ up to a  $U(1)$ rotation in $\C^3 \cong \R^6$.
Therefore the  above $X^a(t)$ are a solution of the BFSS equations of motion.

By setting the generators of one of the fuzzy spheres to zero, 
one  obtains a solution of the type 
$\R_t \times \R^2_\theta \times \R\times_\xi S^1 \times S^2$. 
Note  that it is not possible to use an internal symmetry of $S^2$ for the $U(1)$ associated with time,
because the corresponding vector fields have zeros and cannot stabilize $S^2$.

\paragraph{IKKT solution $\R^4\times T^2\times S^2$.}

To find an analogous solutions for the IKKT model, we can use the construction in \eq{solution-T4-twisted},
replacing a $T^2$ with an $S^2$. To do this, we first embed $S^2_N$ in $\R^6$ by adding a 
trivial extra coordinates to \eq{fuzzy-sphere}.
Thus consider
\begin{align}
X^a = \begin{pmatrix}
 X^\mu \\
Z^1 \\ 
Z^2 \\
Z^3 \\
\end{pmatrix}
=
\begin{pmatrix}
 X^\mu \\
 Y^1  \\
 Y^2  \\
 Y^3
\end{pmatrix}
\end{align}
where $Y^{1,2,3}$ form a fuzzy sphere with $\Box_Y Y^i= \xi^2 Y^i$. 
Now twist this configuration with two fuzzy cylinders $(U^2,X^2)$ and $(U_3,X^3)$ with NC moduli $\xi_2 = \xi_3$, 
and finally twist the compact directions with the quantum plane $\R^2_\theta$. This gives
\begin{align}
{X^a}' = \begin{pmatrix}
 {X^\mu}',\,\, \mu = 0,1  \\
 {X^i}' ,\,\, i = 2,3 \\
{Z^1}' \\
{Z^2}' \\
{Z^3}'
\end{pmatrix}
=
\begin{pmatrix}
 X^\mu \\
 X^i ,\,\, i = 2,3 \\
 Z^1 \, U_2 \, e^{i k_\mu X^\mu} \\
 Z^2 \, U_2 \, e^{i k_\mu X^\mu} \\
 Z_3\,  U_3\, e^{i k_\mu X^\mu} 
\end{pmatrix}
\label{solution-T2S2-twisted}
\end{align}
Using lemmas 1 and 2, this is a solution of the IKKT model provided
\be
\quad k \cdot k = - (\xi^2 + \xi_3^2) 
\ee
Note that  we need 2 noncommutative non-compact directions here, since no matrix
solution of type $\R^2 \times S^2$ with split noncommutativity 
(corresponding to $T^* S^2$) is known.

\paragraph{IKKT solution $\R^4\times S^2\times S^2$.}

Now start with the two commuting fuzzy spheres \eq{2-spheres}, 
and define $\tilde Z^\a = Y^\a_L + i Y^\a_R$.
Then
\begin{align}
{X^a} = \begin{pmatrix}
 {X^\mu},\,\, \mu = 0,...,3  \\
{Z^1} \\
{Z^2} \\
{Z^3}
\end{pmatrix}
=
\begin{pmatrix}
\bar  X^\mu \\
 \tilde Z^1 \, e^{i k_\mu\bar X^\mu} \\
 \tilde Z^2 \, e^{i k_\mu\bar X^\mu} \\
 \tilde Z^3 \, e^{i k_\mu \bar X^\mu} 
\end{pmatrix}
\label{solution-S2S2-twisted}
\end{align}
where $\bar X^\mu, \,\mu = 0,...,3$ define a 4-dimensional quantum plane $\R^4_\theta$.
Using an obvious generalization of lemma 2, it follows that
\be
\Box X^a = 0\qquad \mbox{for}\quad  k \cdot k = - \xi^2  
\ee
where $k \cdot k $ is now defined in terms of the 4-dimensional effective metric 
$\bar G^{\mu\nu}$ on $\R^4_\theta$.

\subsection{Moduli and deformations}
\label{sec:moduli}

It was pointed out above that the scales of the fuzzy cylinders are free moduli, 
and not determined by some explicit scale or potential in the action. 
Moreover, they also drop out from the action. To see this, 
assume that $(U,X^3)$ form a fuzzy cylinder, with $U = X^1 + i X^2$ and $[X^1,X^2] = 0$. Then
the potential (i.e. the argument of the Euclidean sector of the matrix model) can be written as
\bea
\sum_{i,j=1,2,3} [X^i,X^j][X^i,X^j] = [U,X^3][U^\dagger,X^3] + [U^\dagger,X^3][U,X^3]  = -2 \xi^2 R^2 ,
\label{action-cylinder}
\eea
while the opposite sign arises from the ``kinetic term'', where $X^3$ is replaced by the time-like $X^0$. 
Thus the the rotating cylinder solution \eq{rotating-cylinder} satisfies
\be
\sum_{\mu,\nu =0,..,3} [X^\mu,X^\nu][X^{\mu'},X^{\nu'}]\eta_{\mu\mu'}\eta_{\nu\nu'} = 0 .
\ee
In particular, the moduli $R$ and $\xi$ drop out from the action.
This can be understood
by recalling that the action is given semi-classically by $G^{\mu\nu}g_{\mu\nu}$
(up to normalization) \cite{Steinacker:2010rh}, together with the fact that $g_{\mu\nu}$ is degenerate as explained 
in section \ref{sec:rotating-cylinder}.
In the case of the propagating plane wave solution \eq{prop-plane-wave},
the same  computation applies, except that there is a non-vanishing contribution from the quantum plane 
$[X^0,X^1] = i \theta$.

%
%
%
%

Similar cancellations occur for the other solutions presented in this paper, 
where each fuzzy cylinder contributes a term \eq{action-cylinder}, which 
finally cancel with the time-like contribution. 
However, there are non-vanishing contributions from the explicitly noncommutative 
fuzzy tori or fuzzy spheres, and similarly from the NC planes.
It is nevertheless interesting to note that there is a cancellation mechanism, and
the moduli of the cylinders drop out from the action\footnote{the
rotating cylinder or ``supertube'' solution of the BFSS model is indeed known to be 
BPS \cite{Bak:2001kq}.} (but not from the energy). 
This suggests that such split NC solutions are flexible, in the sense that e.g. embedding deformations
may be accomodated by adjusting the  moduli. This should be important for gravity.


\subsubsection{2-dimensional solutions with arbitrary cross-section}
\label{sec:general-cross}

The propagating plane wave and rotating cylinder solutions \eq{prop-plane-wave}, \eq{rotating-cylinder}
can be generalized to generic shapes. This is well-known
for the BFSS case \cite{Terashima:2007uf}, and should not be too surprising 
in view of the discussion in the previous paragraph.
To see this, define the following ``light-cone matrix coordinates'' 
\be
X^\pm = X^0 \pm X^3 
\ee
which satisfy
\be
[X^+,X^-]  = 2 [X^3,X^0]  = -2i \theta .
\ee
Now for any function $f$ of a complex variable, define
\be
X^1+iX^2 := f(X^+) , \qquad X^1-iX^2 := \bar f(X^+) .
\ee
Then $[X^1,X^2] = 0$, and \eq{id-2} implies that 
\be
[X^1,[X^1,X^+]] + [X^2,[X^2,X^+]] = 0 = [X^1,[X^1,X^-]] + [X^2,[X^2,X^-]]
\ee
and similarly  
\be
[X^0,[X^0,U]] - 2[X^3,[X^3,U]] =0 .
\ee
Therefore
\be
\Box X^a = 0 .
\ee 
In the special case $X^{1} + i  X^{2} = R e^{-i\frac \xi h(X^0+X^3)}$, the propagating plane wave 
\eq{prop-plane-wave}  is recovered.
Alternatively, if we consider the sub-algebra generated by $X^-$ and $U= f(X^+)$, then  the 
deformed rotating cylinder solution \eq{rotating-cylinder} is recovered.
Thus we found  solutions which describes a deformed propagating wave,
and a rotating cylindroid.
This is analogous to the well-known chiral decomposition for a propagating string.
However, the corresponding generalization of the higher-dimensional solutions is not clear.

\section{Effective gauge theory and Kaluza-Klein modes}
\label{sec:gauge}

Now consider a background  of type $\R^n \times K$ in the matrix model where $K$ is compact,
with split noncommutativity as in \eq{split-NC}. It follows on general grounds 
that the fluctuations around such a 
background (more precisely: around a stack of coinciding such branes)
will define a noncommutative gauge theory on $\R^n\times K$.
Now by an important observation in section \ref{sec:split-NC}, this
implies --due to the uncertainty relations, assuming ``strictly'' split NC -- 
that there is a UV cutoff on $\R^4$ given by
\be
\L_{UV} = \L_{NC}^2 R
\label{UV-split}
\ee
where $R$ is the length scale of $K$.
This is physically very welcome, and in 
sharp contrast to non-compact noncommutative spaces such as the Moyal-Weyl quantum plane which
have no intrinsic UV cutoff.

It might appear that this would provide an easy cure for all UV divergences in QFT, 
but of course things are not that simple. 
To understand what happens, we first compute the spectrum of the Laplacian on the 
fuzzy cylinder:

\paragraph{Spectrum on fuzzy cylinder.}

Let us  evaluate the Laplace operator on the basis of functions \eq{cylinder-functions} on a 
fuzzy cylinder:
using
\begin{align}
[U,[U^\dagger,e^{i p X^3} ]] &= 2 R^2 e^{i p X^3} - U e^{i p X^3} U^\dagger - U^\dagger e^{i p X^3} U \nn\\
 &= R^2(2 - e^{-i p\xi} - e^{i p\xi})\, e^{i p X^3}= 4 R^2 \sin^2(p\xi/2)\, e^{i p X^3}
\end{align}
we obtain
\begin{align}
\Box e^{i p X^3} U^n &= [U,[U^\dagger,e^{i p X^3} U^n]] + [X^3,[X^3,e^{i p X^3} U^n]] \nn\\
 &= \(4 R^2 \sin^2(p\xi/2) + n^2 \xi^2 \)\,  e^{i p X^3} U^n \nn\\
 \, &\stackrel{p \xi \ll 1}{\sim} \, \(R^2 p^2 + n^2  \)\, \xi^2 e^{i p X^3} U^n 
\label{spectrum-cyl}
\end{align}
where $p \in [-\frac{\pi}{\xi},\frac{\pi}{\xi}]$.
Clearly $n$ labels the Kaluza-Klein (KK) modes on $S^1$. Indeed,
there is a UV cutoff on the non-compact space $|p|\leq \frac{\pi}{\xi} = \pi \L_{NC}^2 R$, as expected from 
\eq{UV-split}. This means that $\R$ is literally 
discretized by a lattice, as is manifest in the representation \eq{cylinder-rep}.
Analogous statements hold for higher-dimensional fuzzy cylinders $\R^n \times_\xi T^n$.

Therefore from a gauge theory point of view, 
the matrix model on the background $\R^n \times_\xi T^n$ behaves at low energies 
like an effective $n$-dimensional gauge theory on $\R^n$, with  effective UV cutoff
$\L_{UV} = \L_{NC}^2 R$. 
Thus the naive expectation of an  UV cutoff on NC spaces 
-- which is {\em not} borne out on $\R^n_\theta$ -- seems indeed realized here, moreover
in an essentially isotropic way without explicitly Lorentz-violating  $\theta^{\mu\nu}$ on $\R^n$.
Thus this split NC scenario appears to be very attractive.

However, there are problems. First, we have seen in section \ref{sec:higher-cylinder}
 that on such a ``strict'' split NC background, the 
non-compact space $\R^n$ has Euclidean signature 
(unless we resort to complexified backgrounds).  Furthermore, 
in the extreme UV limit the KK modes become relevant, and the
effective geometry becomes that of $T^n$,
as can be seen from \eq{spectrum-cyl}.
This leads to the same UV divergences as in a $n$-dimensional field theory. Moreover
UV/IR mixing may still occur, since for very low non-compact momentum $p$,
the internal loop momentum on $T^n$ may be arbitrarily high. 
Therefore in order to have a well-defined quantum theory, we still need maximal SUSY,
as realized in the IKKT model. We note the observation that NC backgrounds of type  $\R^4 \times K^4_N$ 
are effectively 4-dimensional in the UV.

It is interesting to compare this  with  product spaces such as $\R^4_\theta \times K^4_N$,
where $K^4_N$ is a $4$-dimensional fuzzy space described by a finite matrix algebra $\Mat(N,\C)$.
Then there is no UV cutoff on $\R^4_\theta$, while the tower of Kaluza-Klein modes on $K^4_N$ is 
finite. This is reversed compared with the above case of 
split noncommutativity, where $\R^4$ has a UV cutoff while $K$ has none.
However in either case, the UV divergences will be that of a 4-dimensional QFT,
either due to $\R^4$ or due to $K^4$.
Now recall that $\cN=4$ SYM is UV finite in 4 dimensions, for any genus in a 
large $N$ expansion. This strongly suggests that finiteness also holds for the noncommutative $\cN=4$ model,
i.e. for the IKKT model on such a background. 
This suggests that even D7 branes can be  consistent 
backgrounds for this model, provided 4 of their dimensions are compactified.
This should help to obtain rich enough structures for the compactification, towards a
realistic low-energy spectrum of the theory, 
cf. \cite{Aoki:2010gv,Aschieri:2006uw,Chatzistavrakidis:2009ix,intersect-paper}.

\subsection{Minkowski signature.}
\label{sec:spec-Mink-cylinder}

In the presence of Minkowski signature, the situation is once again more tricky.
Consider first the case of non-propagating cylinder solutions in section \eq{sec:higher-cylinder}.
For a fuzzy cylinder with time-like $X^0$, the Laplacian becomes
\begin{align}
\Box e^{i p X^0} U^n &= [U,[U^\dagger,e^{i p X^0} U^n]] - [X^0,[X^0,e^{i p X^0} U^n]] \nn\\
 &= [U,[U^\dagger,e^{i p X^0} ]] U^n - e^{i p X^0} [X^0,[X^0, U^n]] \nn\\
 &= \(4 R^2 \sin^2(p\xi/2) - n^2 \xi^2 \)\,  e^{i p X^0} U^n .
\label{spec-laplace-cylinder}
\end{align}
Note that the modes $U$ on the circle have negative sign. This is 
consistent with presence of closed time-like circles with respect to $G^{\mu\nu}$ as discussed before, 
which is clearly unphysical.

One possible way out might be to resort to some analytic continuation with purely imaginary $\xi$. 
Here we consider instead the propagating cylinder solutions in section \ref{sec:propagating-cylinder},
where these problems do not arise:

\paragraph{Propagating cylinders.}

Consider the propagating $\R^3_\theta \times T^1$ solution  \eq{R3T1-solution}. 
Then the wave- functions can be expanded in the following basis
\be
\{ e^{i p_j  X^j} U^n, \quad p_2 \in [-\frac{\pi}\xi,\frac{\pi}\xi], \,\, p_{0,1} \in \R, \quad n\in \Z \} .
\label{cylinder-functions-R3T1}
\ee
for $j=0,1,2$.
Now the Laplacian becomes
\begin{align}
\Box (e^{i p_j X^j} U^n) &= [\tilde U,[\tilde U^\dagger,e^{i p_j X^j} U^n]] 
  - [X^0,[X^0,e^{i p_j X^j} U^n]] 
+ \sum_{i=1,2} [X^i,[X^i,e^{i p_j X^j} U^n]] \nn\\
 &= \(4 R^2 \sin^2\Big(\frac{p_2\xi - k_\mu\theta^{\mu\nu} p_\nu}2\Big) 
+  \bar G_{(2)}^{\mu\nu} p_\mu p_\nu + n^2 \xi^2 \)\,  e^{i p_j X^j} U^n  
\label{spec-laplace-cylinder-M}
\end{align}
where  $\tilde U = U e^{i k_\mu X^\mu}$, and
$\bar G_{(2)}^{\mu\nu} = \theta^{\mu\mu'} \theta^{\nu\nu'} \eta_{\mu'\nu'}, \,\mu,\nu = 0,1$ denotes the 
$\R^2_\theta$ contribution to the effective metric.
Now the discrete Kaluza-Klein modes have positive mass as they should. Restricted to the lowest
KK sector $n=0$ and in the limit $p \xi \ll 1$, the spectrum of the Laplacian on 
the non-compact $\R^3$ becomes 
\begin{align}
\Box e^{i p_j X^j}  \,\approx 
   \,  \Big(R^2 \Big(\frac{p_2\xi - k_\mu\theta^{\mu\nu} p_\nu}2\Big)^2 
 + \bar G_{(2)}^{\mu\nu} p_\nu p_\nu\Big)\, e^{i p_\mu X^\mu}  
 = (p \cdot p)\, e^{i p_j X^j} , \qquad p \xi \ll 1
\label{eff-3D-prop-diag}
\end{align}
 where
\be
p \cdot p = R^2\,\begin{pmatrix}
             p_\mu, p_2
            \end{pmatrix}
\begin{pmatrix}
           R^{-2} \bar G_{(2)}^{\mu\nu} + \tilde k^\mu \tilde k^\nu & -\xi \tilde k^\mu \\
           - \xi \tilde k^\nu & \xi^2 
             \end{pmatrix} 
\begin{pmatrix}
             p_\mu \\ p_2
            \end{pmatrix} .
\ee
This  agrees with the semi-classical result using the effective metric 
\eq{horiz-metric-3}.
This has the desired physical properties, in particular  Minkowski signature. 
In the extreme UV, the spectrum is parametrized by $p_0, p_1$ and $n$,
corresponding to that of a 3-dimensional field theory on $\R^2\times S^1$.

Finally, one may consider analogous solutions of the undeformed $\cN=4$ SYM theory.
For example, fuzzy sphere configurations can be realized by the six $SU(N)$-valued scalar fields $\phi^i$, 
which correspond to the $Y^i$. 
Then configurations such as \eq{solution-S2S2-twisted} correspond to positive-energy solutions
of $\cN=4$ SYM, with a non-vanishing expectation value of some $U(1) \subset SU(4)$  generator of the 
internal $R$ - symmetry. Such a vacuum should be stable even though it has positive energy, 
because the charge is conserved. 
Note also that e.g. the flux of the $S^2_N$ is quantized and thus protected.


\subsection{Remarks on emergent gravity}
\label{sec:gravity}

In general, the geometry of brane solutions in the IKKT model is governed
by the $U(1)$ sector of the corresponding noncommutative gauge theory 
\cite{Steinacker:2007dq,Steinacker:2008ri,Steinacker:2010rh}.
The transversal fluctuations correspond to scalar fields, which via the bare matrix model action 
satisfy $\Box_G \phi^i = 0$. This does not directly lead (without quantum effects)
to general relativity, but might 
be relevant e.g. for large-scale cosmological modifications. On the other hand,
the tangential fluctuations  of the branes corresponding to trace--$U(1)$ gauge fields are governed 
by NC Maxwell equations, which -- as discovered by Rivelles \cite{Rivelles:2002ez} --
lead to Ricci-flat deformations of flat spaces $R_{\mu\nu}[\bar G + h] \approx R_{\mu\nu}[\bar G] = 0$.
Thus higher-dimensional branes should help to recover (near-) Ricci flat vacuum geometries and
general relativity from the matrix model, perhaps even without resorting to induced gravity.
Indeed $\R^4_\theta \times T^n$ is flat, hence the tangential $U(1)$ gauge fields 
should lead to $2+n$ degrees of freedom for Ricci-flat  on-shell deformations of the geometry.
This will be studied elsewhere in more detail.

It is also important in this context that 
the tubular solutions obtained here are flexible, because the scales are not fixed but free moduli.
Moreover, their action vanishes under suitable conditions, as discussed in section \ref{sec:moduli}.
All these aspects should be relevant for gravity, and may help to recover effectively GR in 4 dimensions.

\section{Conclusion}
\label{sec:conclusion}

In this paper, we constructed new solutions of the IKKT model of the type $\R^4 \times K$, where
$K= T^2,\, K= S^3\times S^1,\,  K = S^2 \times T^2$ and $K = S^2 \times S^2$. 
The compact spaces in these solutions are rotating, and  stabilized by angular momentum. 
This is in contrast to previous realizations of fuzzy spaces in matrix models, which required 
additional terms in the model that break some of the symmetries.
In particular, the Minkowski signature of the model is essential here, and there are no
such solutions in Euclidean models. This should shed new light on the search for 
non-perturbative vacua in the IKKT matrix model \cite{Aoyama:2010ry}.
Furthermore, these solutions are expected to be generic, in the sense that they admit
deformations both for the non-compact $\R^4$ part and also for the compact sector.
This is important from the point of view of the effective (emergent) gravity on the branes, which  
play the role of physical space-time.

These new solutions have several interesting features. 
They arise from a Poisson structure which connects the non-compact
with the compact space, dubbed ``split noncommutativity''. In contrast to compactifications on 
fuzzy spaces, this can lead to an infinite tower of
Kaluza-Klein modes, and a strict UV cutoff on (some directions of) the non-compact space.
In the IR limit the physics is governed by the 3+1--dimensional non-compact space. 
In the extreme UV, the physics is still 4-dimensional 
but with some compactified directions. In particular, the quantization is  expected to be finite even
on $\R^4 \times K^4$,
governed by NC $\cN=4$ SUSY Yang-Mills in 4 dimensions. 

Space-time solutions with compactified extra dimensions are clearly of great interest in the
context of particle physics, as a possible source of structure needed for realistic low-energy gauge theories.
A purely 4-dimensional $\R^4_\theta$ solutions of the IKKT model is too simple. 
However, extra dimensions may lead to SUSY breaking, and chiral fermions may arise e.g. on intersecting brane solutions. 
It is plausible that such intersecting branes 
should  have only 4 non-compact dimensions, and hence be compactified as above. 
Therefore the solutions presented here can be building blocks for realistic vacua in matrix models,
as demonstrated in \cite{Blumenhagen:2006ci,intersect-paper}.
%

Finally, since the geometry of the brane is governed by the $U(1)$ sector of the noncommutative gauge theory
 \cite{Steinacker:2007dq,Steinacker:2010rh}, extra dimensions also provide  
additional geometrical degrees of freedom and 
possibly new mechanisms, which may help to approximately recover general relativity
on space-time branes.



\section*{Appendix: De-compactification of fuzzy torus and sphere}

The fuzzy cylinder can be viewed as a de-compactification limit of other compact fuzzy spaces. 
Consider first the fuzzy torus $T^2_N$ embedded in $\R^4$. It is defined in terms of 
4 hermitian matrices packaged in terms of  
$U = X^1 + i X^2, \,\, V = X^3 + i X^4$ which satisfy the relations
\bea
(X^1)^2 + (X^2)^2 &=& 1 = (X^3)^2 + (X^4)^2 ,  \nn\\
\,[U, V] &=& (q-1) V U, 
\label{fuzzy-torus}
\eea
for $q= e^{2\pi i/N}$ and with $U^N = V^N = 1$, cf. \cite{de Wit:1988ig,hoppe}. 
The irreducible representations in terms of clock-and shift matrices
are well-known and need not be repeated here.
It is easy to see that the following relations hold
\be
\Box X^a 
 = 4\sin^2(\pi/N)\, X^a, \qquad a = 1,..., 4 .
\label{Box-torus}
\ee
If we ``de-compactify'' one circle by writing $V = \exp(i \a_N Y)$, then \eq{fuzzy-torus}
reduces to the fuzzy cylinder in the limit $N \to \infty$  with $\xi = \frac{2\pi}{N\a_N} = const$.

A similar de-compactification can also be carried out for  the fuzzy sphere
$S^2_N$ \cite{Madore:1991bw}. It is defined in terms of three $N \times N$ hermitian matrices $X^a, a=1,2,3$ 
subject to the relations
\begin{equation}
[ X^{{a}}, X^{{b}} ] = \frac{i}{\sqrt{C_N}}\varepsilon^{abc}\, X^{{c}}~ , 
\qquad \sum_{{a}=1}^{3} X^{{a}} X^{{a}} =  \one 
\label{fuzzy-sphere}
\end{equation}
where $C_N= \frac 14(N^2-1)$. 
It is easy to see that 
\be
\Box X^a = C_N\, X^a, \qquad a = 1,..., 4 .
\label{Box-sphere}
\ee
The fuzzy cylinder is obtained near the equator $X^3 \approx 0$, setting $U = X^1 + i X^2$
and upon appropriate rescaling for $N \to \infty$.


%
%

\paragraph{Acknowledgments.}

Useful discussions with O. Ganor, H. Kawai, D. L\"ust, S. Iso, Y. Kitazawa, 
J. Nishimura, N. Sasakura and P. Schreivogl are greatfully acknowledged, as well as hospitality at UC Berkeley, 
KEK and the Yukawa Institute.
I also thank A. Chatzistavrakidis for reading the manuscript.
This work  was supported by the Austrian Science Fund (FWF) under the contract
P21610-N16.

%


\end{document}